\newtheorem{theorem}{Theorem}
\newtheorem{definition}{Definition}%
\begin{document}

\title[Article Title]{Modeling Complex Life Systems: Bayesian Inference for Weibull Failure Times Using Adaptive MCMC}

\author*[1]{\fnm{Tobias} \sur{Oketch}}\email{too21@msstate.edu}

\author[2]{\fnm{Mohammad} \sur{Sepehrifar}}\email{msepehrifar@math.msstate.edu}

\affil*[1,2]{\orgdiv{Department of Mathematics and Statistics}, \orgname{Mississippi State University}, \orgaddress{\street{175 President Circle}, \city{Mississippi State}, \postcode{39762}, \state{MS}, \country{USA}}}

\abstract{
This research develops a Bayesian framework for analyzing failure times using the Weibull distribution, addressing challenges in prior selection due to the lack of conjugate priors and multi-dimensional sufficient statistics. We propose an adaptive semi-parametric MCMC algorithm for lifetime data analysis, employing a hierarchical Bayesian model and the No-U-Turn Sampler (NUTS) in STAN. Twenty-four combinations of prior distributions are evaluated, with a noninformative LogNormal hyper-prior ensuring flexibility. A simulation study of seventy-two datasets with varying structures compares MCMC and classical methods, identifying optimal priors for Bayesian regularization. The approach effectively handles the Increasing Hazard Rate (IHR) and Decreasing Hazard Rate (DHR) scenarios. Finally, we demonstrate the algorithm's utility by predicting the remaining lifetime of prostate cancer patients, showcasing its practical application. This work advances Bayesian methodologies for modeling complex life systems and testing processes.
}

\keywords{Weibull distribution, Hazard Rate, STAN, Markov Chain Monte Carlo (MCMC), Bayesian analysis.}

\pacs[MSC Classification]{62C10}

\maketitle

\section{Introduction}\label{sec1}

Predicting failure times in dynamic systems is crucial for reliability engineering and medical prognosis, yet existing methods often struggle with small data and evolving failure patterns. The Weibull distribution, renowned for its flexibility, is a key tool in addressing these challenges, as it effectively models diverse failure behaviors observed in engineering, biomedical research, and industrial reliability (see \cite{monteiro2017weibull}, \cite{wais2017two}, \cite{milad2011climate}, \cite{pieters2009comparison}).

Let \( t_1, t_2, \dots, t_n \) denote independently and identically distributed (i.i.d) failure times of \( n \) units. The random variable \( T \), representing the failure time, follows a Weibull distribution \( Weibull\left(\theta \right) \), where \( \theta = \left(\beta, \alpha\right) \) denote the shape parameter \( \beta \) and the scale parameter \( \alpha \). Its cumulative distribution function (CDF) is given by:

\begin{equation}
\label{WeibullCDF}
  F_T\left(t\right) = 
  \begin{cases}
  1 - \exp\left\{- {\left(\nicefrac{t}{\alpha}\right)}^{\beta}\right\} , & t, ~ \alpha, ~ \beta ~ > ~ 0 \\
  0                                                                & \text{otherwise}
  \end{cases}
\end{equation}

The shape parameter \( \beta \) is pivotal in characterizing system behavior: \( \beta = 1 \) indicates constant failure rates, \( \beta > 1 \) suggests increasing hazard rates (IHR) indicative of an aging system, and \( \beta < 1 \) corresponds to decreasing hazard rates (DHR) reflecting early-life failures.
This interpretation aligns with the Weibull hazard function:
\begin{equation}
\label{failureRateFunction}
h_T\left(t\right)=\nicefrac{F^{\prime}\left(t\right)}{1-F\left(t\right)}=\left(\nicefrac{\beta}{\alpha}\right)\left(\nicefrac{t}{\alpha}\right)^{\beta-1}
\end{equation}

This function describes how failure likelihood evolves and is crucial for predictive maintenance and risk assessment. The significance of \( \beta \) in interpreting system reliability has been extensively studied. For instance, \cite{chumnaul2018generalized} and \cite{chumnaul2022modified} developed robust methodologies for inference, highlighting the importance of accurate parameter estimation. Despite the widespread use of the Weibull distribution, parameter estimation remains challenging.

Classical methods such as the Method of Moments, Maximum Likelihood Estimation (MLE), and Ordinary Least Squares (OLS) often suffer from unstable estimates, especially when failure data is sparse or heavily skewed \cite{goldstein2004problems}. Bayesian inference offers a more robust alternative, yet its application to Weibull analysis is hindered by the lack of conjugate priors for Weibull parameters and the multi-dimensional nature of the shape parameter's sufficient statistic that necessitates advanced computational techniques. Markov Chain Monte Carlo (MCMC) methods provide a natural solution for these challenges, but they require a Bayesian framework with well-specified prior distributions \cite{gelman2017prior, depaoli2020importance, tian2023specifying}. Selecting appropriate priors remains a significant challenge, particularly for small or complex datasets, as traditional Bayesian approaches often lack conjugate priors, leading to inefficiencies and biases in inference.

\subsection{Bayesian Analysis}

We introduce a robust adaptive Bayesian framework for Weibull failure-time analysis to overcome these limitations. Given the observed failure times  \( t_1, t_2, \dots, t_n \), we assume:
\begin{equation}
t_i \sim Weibull\left(\beta, \alpha \right)
\end{equation}
Since the likelihood function for the Weibull distribution is:
\begin{align}\label{WeibLikelihoodFun}
	L\left\{\beta, \alpha ~ | ~ t_1, \dots t_n \right\} = \prod_{i=1}^{n}t_i^{\beta - 1}\left\{\frac{\beta}{\alpha^{\beta}} \right\}^n \exp\left\{-\frac{1}{\alpha^\beta}\sum_{i=1}^{n}t_i^\beta,\right\} 
\end{align}
direct estimation of \(\beta\) and \(\alpha\) requires computational techniques that handle the nonlinearity and lack of conjugacy. To ensure efficient posterior sampling, we adopt a hierarchical Bayesian approach, assigning flexible priors to \(\beta\) and \(\alpha\), and employ the No-U-Turn Sampler (NUTS) \cite{hoffman2014no}, an advanced variant of Hamiltonian Monte Carlo (HMC) \cite{betancourt2016diagnosing}, implemented in STAN \cite{carpenter2017stan}. This framework enables efficient posterior sampling and improved parameter estimation, particularly in complex, data-limited scenarios.

\begin{theorem} Convergence of Adaptive MCMC for Weibull Parameters

Let \( \theta = (\beta, \alpha) \) be the parameters of a Weibull distribution and let \( \pi(\theta \mid \mathbf{t}) \) be the posterior distribution given observed failure times \( \mathbf{t} = (t_1, t_2, \dots, t_n) \). Under the following conditions:
\begin{enumerate}
\item The prior distributions \( \pi(\beta) \) and \( \pi(\alpha) \) are proper and continuous on \( \mathbb{R}^+ \).
\item The likelihood function \( f(\mathbf{t} \mid \theta) \) is bounded and Lipschitz continuous in \( \theta \).
\end{enumerate}
Then, the adaptive MCMC algorithm using the No-U-Turn Sampler (NUTS) converges to the true posterior distribution \( \pi(\theta \mid \mathbf{t}) \) as the number of iterations \( N \to \infty \).
\end{theorem}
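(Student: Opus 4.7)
The plan is to combine the standard posterior-propriety argument with the convergence theory for adaptive MCMC due to Roberts and Rosenthal, specialized to the NUTS variant of Hamiltonian Monte Carlo. The overall structure I would follow is: (i) show that $\pi(\theta \mid \mathbf{t})$ is a well-defined probability measure on $\mathbb{R}^+ \times \mathbb{R}^+$; (ii) verify that each NUTS transition kernel leaves $\pi(\theta \mid \mathbf{t})$ invariant; (iii) establish $\pi$-irreducibility, aperiodicity, and geometric ergodicity of the kernel obtained when adaptation is frozen; and (iv) lift ergodicity to the adaptive regime via diminishing-adaptation and containment conditions.

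For step (i), integrating the likelihood in equation (\ref{WeibLikelihoodFun}) against the proper priors from condition~1 yields a finite normalizing constant because $f(\mathbf{t}\mid\theta)$ is bounded by condition~2, so $\pi(\theta\mid\mathbf{t}) = f(\mathbf{t}\mid\theta)\pi(\beta)\pi(\alpha)/Z$ with $0 < Z < \infty$. For step (ii), I would reparametrize in unconstrained coordinates $(\log\beta,\log\alpha)\in\mathbb{R}^2$ so the sampler operates on all of $\mathbb{R}^2$, and then invoke the construction of NUTS, in which a momentum variable $p\sim \mathcal{N}(0,M)$ is augmented, a doubled leapfrog trajectory is built under the no-U-turn termination rule, and the next state is drawn either by multinomial or slice-sampling selection. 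By the detailed-balance argument of Hoffman and Gelman, this procedure preserves the augmented distribution $\pi(\theta)\mathcal{N}(p;0,M)$, so marginalizing $p$ leaves $\pi$ invariant. Lipschitz continuity of the log-likelihood (via condition~2) ensures the leapfrog integrator is well-defined and reversible, and positivity of the Hamiltonian flow density gives $\pi$-irreducibility; aperiodicity follows from the randomization of $p$ at each iteration.

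For step (iii), non-adaptive geometric ergodicity for HMC on log-concave (or strongly-log-concave-at-infinity) targets is by now standard; the Weibull log-posterior in unconstrained coordinates, combined with proper continuous priors with sufficient tail decay (e.g.\ the LogNormal hyperprior used in the paper), produces a target whose tails give the required drift condition on a small set. For step (iv), I would appeal to the Roberts--Rosenthal framework for adaptive MCMC, which requires \emph{diminishing adaptation} and \emph{containment}. Dual averaging of the step size in STAN's warmup phase satisfies diminishing adaptation at a Robbins--Monro rate, and the adaptive mass matrix converges as empirical second moments accumulate; because STAN freezes these after warmup, diminishing adaptation is in fact immediate in the production phase. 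Containment then follows from the uniform ergodicity of the frozen kernel on compact sublevel sets of the log-posterior, which exist thanks to the properness of the priors.

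The main obstacle is step (iv), specifically verifying containment for NUTS. Unlike vanilla HMC, NUTS uses a data-dependent trajectory length determined by the U-turn criterion, so the number of leapfrog steps per iteration is a random stopping time that is not uniformly bounded. Controlling this requires showing that the expected trajectory length and the acceptance probability are uniformly bounded away from zero and infinity on compact sets of parameter space, which I would do by exploiting the Lipschitz bound in condition~2 to obtain a uniform one-step leapfrog energy error, and then arguing that the no-U-turn termination fires in finite expected time under the drift condition established in step (iii). Once containment is in hand, the Roberts--Rosenthal theorem gives convergence in total variation of the adaptive chain to $\pi(\theta\mid\mathbf{t})$ as $N\to\infty$, completing the proof.
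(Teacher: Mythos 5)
Your proposal takes essentially the same route as the paper's proof: both rest on the detailed-balance/invariance argument for NUTS from Hoffman and Gelman, the ergodicity theory for adaptive MCMC of Roberts and Rosenthal, and the boundedness and continuity of the likelihood to secure a unique stationary distribution. The paper's own proof is only a citation-level sketch of these three ingredients, whereas you additionally verify posterior propriety and correctly isolate the genuinely hard step --- containment for NUTS under its data-dependent, unbounded trajectory length --- which the paper passes over in silence; your outline is therefore the more complete version of the same argument.
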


\begin{proof} The convergence follows from the ergodicity of adaptive MCMC algorithms under mild regularity conditions (see \cite{roberts2007coupling}). As a Hamiltonian Monte Carlo (HMC) variant, the NUTS algorithm satisfies detailed balance and ensures efficient parameter space exploration, even in high-dimensional settings \cite{hoffman2014no}. The boundedness and continuity of the likelihood function guarantee a unique stationary distribution to which the Markov chain converges.
\end{proof}

\subsubsection{Methodology}

The methodology introduced in this study is designed to overcome the limitations of traditional failure-time analysis methods, particularly in scenarios lacking previous data on similar systems or experiencing unique aging conditions. Our approach involves the following key steps:

\begin{enumerate}
\item Prior Specification: We assign four priors to \( \beta \) (LogNormal, HalfNormal, Gamma, Exponential) and six priors to \( \alpha \) (LogNormal, Gamma, InverseGamma, HalfCauchy, HalfNormal, Exponential), resulting in 24 prior combinations.

\item Posterior Sampling: Using NUTS, we construct Markov chains to sample from the posterior distribution \( \pi(\theta \mid \mathbf{t}) \), where \( \theta = (\beta, \alpha) \). The posterior is defined hierarchically as:
\[
f(\theta \mid \mathbf{t}, \gamma) \propto f(\mathbf{t} \mid \theta) \pi(\theta \mid \gamma) h(\gamma),
\]
with \( \gamma \) following a LogNormal(0, 25) hyper-prior.

\item Simulation Study: We generate 72 survival datasets from Weibull distributions with varying structures and fit the 24 MCMC models to each dataset. This allows us to identify optimal prior combinations that regularize the Bayesian model effectively.

\item Model Comparison: We evaluate model performance using weighted relative efficiency, WRE, comparing Bayesian and classical methods (MLE, Moments, OLS).

\item Application: We apply the proposed algorithm to prostate cancer data, validating our simulation results and predicting remaining lifetimes.

\end{enumerate}

\begin{theorem}\label{theo2}
	Let \(\theta = (\beta, \alpha)\) be the shape and scale parameters of a Weibull distribution, estimated from a sample of size \(n\). Under regular conditions, the Maximum Likelihood Estimator \(\hat{\theta}_n = (\hat{\beta}_n, \hat{\alpha}_n)\) is asymptotically normal:
	\[
	\sqrt{n} (\hat{\theta}_n - \theta_0) \overset{d}{\longrightarrow} N(0, I_n(\theta_0)^{-1}),
	\]
	where \(I_n(\theta)\) is the Fisher Information Matrix, and \(\theta_0 = (\beta_0, \alpha_0)\) represents the true parameter values.
\end{theorem}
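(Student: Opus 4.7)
The plan is to deduce the result as a direct application of the classical asymptotic theory of maximum likelihood estimation for regular parametric families (the Cramér--Wald--Lehmann framework). Since the likelihood in \eqref{WeibLikelihoodFun} is smooth in \(\theta=(\beta,\alpha)\) on the open set \((0,\infty)^2\) and the Weibull support \((0,\infty)\) does not depend on \(\theta\), the bulk of the work reduces to verifying the Cramér regularity hypotheses for the log-density
\[
\ell(t;\theta) = \log\beta - \beta\log\alpha + (\beta-1)\log t - (t/\alpha)^{\beta},
\]
and then reading off the conclusion. I would interpret the \(I_n(\theta_0)^{-1}\) in the statement as the (per-observation) Fisher information \(I(\theta_0)^{-1}\), which is the natural limiting covariance under the \(\sqrt{n}\) scaling.

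First I would write the score vector \(U_n(\theta)=\sum_{i=1}^n \nabla_\theta\ell(t_i;\theta)\) and the observed information \(J_n(\theta)=-\sum_{i=1}^n \nabla_\theta^2\ell(t_i;\theta)\), and compute \(I(\theta)=\mathbb{E}_\theta[-\nabla_\theta^2\ell(T;\theta)]\) by explicit differentiation. A standard calculation shows that \(I(\theta)\) is finite and positive definite on \((0,\infty)^2\) (its entries involve \(\tfrac{\beta^2}{\alpha^2}\), the Euler--Mascheroni constant, and \(\pi^2/6\)), so \(I(\theta_0)\) is invertible. Next I would verify: identifiability of \((\beta,\alpha)\) from the Weibull law; three-times continuous differentiability of \(\ell\) in \(\theta\); and the existence of a \(\theta_0\)-integrable envelope dominating the third derivatives of \(\ell\) in a neighborhood of \(\theta_0\), so that expectation and differentiation may be interchanged and the Fisher identity \(\operatorname{Cov}_{\theta_0}[\nabla\ell]=I(\theta_0)\) holds.

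Granting these, consistency \(\hat\theta_n\xrightarrow{p}\theta_0\) follows from Wald's argument via a uniform law of large numbers on compact neighborhoods of \(\theta_0\) (combined with a tightness argument to rule out escape to the boundary). A second-order Taylor expansion of the score equation \(U_n(\hat\theta_n)=0\) about \(\theta_0\) then yields
\[
\tfrac{1}{\sqrt{n}}U_n(\theta_0) \;=\; \Bigl(\tfrac{1}{n}J_n(\tilde\theta_n)\Bigr)\,\sqrt{n}\,(\hat\theta_n-\theta_0),
\]
for some \(\tilde\theta_n\) on the segment between \(\hat\theta_n\) and \(\theta_0\). The multivariate CLT applied to the i.i.d. mean-zero score summands gives \(n^{-1/2}U_n(\theta_0)\xrightarrow{d} N(0,I(\theta_0))\), while the LLN together with continuity of \(\nabla^2\ell\) and the consistency of \(\hat\theta_n\) gives \(n^{-1}J_n(\tilde\theta_n)\xrightarrow{p} I(\theta_0)\). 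Slutsky's theorem then delivers \(\sqrt{n}(\hat\theta_n-\theta_0)\xrightarrow{d} N(0,I(\theta_0)^{-1})\), as claimed.

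The main obstacle I expect is the dominated-differentiation step for the shape parameter \(\beta\): derivatives of \((t/\alpha)^\beta\) with respect to \(\beta\) generate factors of \(\log t\) and \(\beta\)-dependent powers of \(t/\alpha\), so constructing an integrable envelope for the third partials uniformly over a compact neighborhood of \(\theta_0\) requires splitting the integral near \(t=0\) (where \(|\log t|\) blows up but the density vanishes polynomially when \(\beta>1\), and more delicately when \(\beta\le 1\)) and at \(t\to\infty\) (where the Weibull tail dominates any polynomial in \(\log t\)). A secondary technical point is ensuring \(\hat\theta_n\) does not drift to the boundary of the parameter space, which I would handle by a compactness-plus-concavity argument on a suitable reparametrization (e.g.\ in \((\log\beta,\log\alpha)\)) before invoking Wald consistency.
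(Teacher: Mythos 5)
Your proposal is correct and follows essentially the same route as the paper: the paper's proof likewise writes down the Weibull log-likelihood, computes the Fisher information matrix explicitly (with entries involving Euler's constant and \(\pi^2/6\)), and then simply invokes the central limit theorem for MLEs ``under regularity conditions,'' which is exactly the classical Cram\'er-type argument you spell out. The only difference is one of rigor: you actually plan to verify the regularity hypotheses (integrable envelopes for the third derivatives in \(\beta\), Wald consistency, positive definiteness of \(I(\theta_0)\)) that the paper takes for granted, and you correctly note that the limiting covariance should be read as the per-observation information \(I(\theta_0)^{-1}\) rather than the \(n\)-scaled \(I_n(\theta_0)^{-1}\) appearing in the statement.
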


\begin{proof}
	See Appendix \ref{Ap-theo2}
\end{proof}

We compute the WRE in Equation \ref{WeightedRelEfficiency} using the asymptotic efficiency result from Theorem \ref{theo2}, as given by the variance-covariance matrix in Equation \ref{AsympVarCovMatrix}.

\begin{align}\label{WeightedRelEfficiency}
	WRE(\hat{\theta}) =\frac{ \nicefrac{\sigma^2_i(\hat{\theta})}{ \sum_{i=1}^{27} \sigma^2_i(\hat{\theta}) } }{ \nicefrac{ V_i(\hat{\theta}) }{\sum_{i=1}^{27}V_i(\hat{\theta})  } }.
\end{align}

where, \newline $V_i(\hat{\theta}) = \nicefrac{1}{n}\left( 1.1087\nicefrac{\hat{\alpha}^2}{\hat{\beta}^2}  +  0.6079 \hat{\beta}^2 - 2 \times   0.2570 \hat{\alpha}\right)$ is the total asymptotic variance and $\sigma^2_i(\hat{\theta}) = \sigma^2_i(\hat{\beta}) + \sigma^2_i(\hat{\alpha})$ the total sampling variance of the $i^{th}$ model fitted to a given data. The asymptotic variance-covariance matrix is given by:

\begin{align}\label{AsympVarCovMatrix}
Var(\hat{\theta}) = \left[I_n \left(\theta\right)\right]^{-1} = \frac{1}{n}
\begin{pmatrix}
1.1087\nicefrac{\alpha^2}{\beta^2} & 0.2570 \alpha \\ 
0.2570 \alpha & 0.6079 \beta^2
\end{pmatrix}
\end{align}

This research bridges theoretical advancements in Bayesian inference with practical applications, offering a scalable and adaptive framework for modeling failure-time data across diverse real-world settings.


\section{Simulation Study}\label{StudyProcedure}

To evaluate the effectiveness of the proposed Bayesian methodology for estimating the shape parameter ($\beta$) and the scale parameter ($\alpha$) of the Weibull distribution, we conduct a comprehensive simulation study. We examine \(24\) different prior combinations, considering four priors for $\beta$ (LogNormal, HalfNormal, Gamma, Exponential) and six priors for $\alpha$ (LogNormal, Gamma, InverseGamma, HalfCauchy, HalfNormal, Exponential), as outlined in Table~\ref{tab: DefinitionOfPriors} in Appendix \ref{MCMCMethods}. This diverse prior specification allows a rigorous assessment of the method's robustness and adaptability in lifetime data analysis.

We generate \(72\) datasets from a Weibull distribution, simulating two distinct hazard rate behaviors: IHR and DHR. The sample sizes ($n$) considered are \(15, 25, 55\), and \(100\), enabling an evaluation of the Bayesian method's scalability in small and large sample settings. Each dataset is analyzed using \(24\) Bayesian models corresponding to the prior combinations and compared against three classical methods: MLE, Method of Moments, and OLS.

We conduct a posterior sampling using the No-U-Turn Sampler (NUTS) within the STAN framework. This method is an efficient variant of Hamiltonian Monte Carlo (HMC) optimized for high-dimensional parameter spaces. We run each MCMC chain for a total of \(2,000\) iterations, which includes a burn-in period of \(1,000\) iterations. We use the R-hat criterion to assess convergence, ensuring that all chains meet the convergence threshold of less than \(1.01\).

As an illustrative case, consider the Gamma-Gamma prior combination for the shape ($\beta$) and scale ($\alpha$) parameters, respectively. We specify the Bayesian model as:

\begin{align}
p(\beta, \alpha, A_{\beta,\alpha}, B_{\beta,\alpha} |\mathbf{t}) & \propto \text{Weibull}(.|\beta, \alpha) \times \text{Gamma}(.|A_{\beta,\alpha}, B_{\beta,\alpha}) \times \text{LogNormal}(.|0, 25),
\end{align}

And the hyper-priors and priors as:

\begin{eqnarray}
\begin{aligned}
A_{\beta}, B_{\beta} & \sim \text{LogNormal}(0,25) & \text{(hyper-priors)} \\
A_{\alpha}, B_{\alpha} & \sim \text{LogNormal}(0,25) \\
\beta & \sim \text{Gamma}(A_{\beta}, B_{\beta}) & \text{(priors)} \\
\alpha & \sim \text{Gamma}(A_{\alpha}, B_{\alpha}) \\
\mathbf{t} & \sim \text{Weibull}(t_{i} | \beta, \alpha), \quad i = 1, 2, \dots, n. & \text{(Likelihood)}
\end{aligned}
\end{eqnarray}

The parameters $A_{\beta}$, $B_{\beta}$, $A_{\alpha}$, and $B_{\alpha}$ are derived using the moment-matching technique applied to the bootstrap distribution of $\boldsymbol{\theta}_B = \{ \hat{\theta}_1, \dots, \hat{\theta}_{1000} \}$. The complete set of prior specifications is provided in Tables \ref{tab: ShapeScalePriorSpecs} and \ref{tab: ScaleOnlyPriorSpecs} in Appendix \ref{MCMCMethods}. Algorithm \ref{alg: WeibullAlgorithmcovers} describes how to characterize the distribution of Weibull parameters. The MCMC technique uses the sampling distributions of the parameters to initialize the hyperpriors by the Moments matching technique.

The performance of the Bayesian models is assessed relative to classical methods using Equation \ref{WeightedRelEfficiency}, the WRE metric, as defined in Theorem \ref{theo2}. The WRE accounts for sampling variance and asymptotic efficiency, providing a comprehensive criterion for model comparison.

Finally, to validate the simulation results, we apply the Bayesian framework to a prostate cancer dataset, demonstrating the method's practical utility in survival analysis. This real-world application highlights the robustness of the Bayesian approach, particularly in cases where data may be sparse or contain substantial noise. Figure \ref{fig: ComputationFramework} outlines the general computational framework used in this research.

\bigskip
 
\begin{minipage}{\linewidth} 
\begin{algorithm}[H]
\DontPrintSemicolon
    \caption{MCMC - Classic Methods for the Distribution of Weibull Parameters}
\label{alg: WeibullAlgorithmcovers}
    \SetKwFunction{BootSample}{SampleWithReplacement}
    \SetKwFunction{EstimateWeibull}{EstimateWeibullParameters}
    \SetKwFunction{SummarizePosterior}{ComputePosteriorSummary}
    \SetKwFunction{SummarizeClassic}{ComputeClassicSummary}
    \SetKwFunction{PostSampler}{PosteriorSampler}
    \SetKwFunction{MCMC}{MCMC}
    \SetKwFunction{ModelData}{CreateModelData}
    \SetKwFunction{InitHPar}{InitializeHyperParameters}
    \SetKwData{Classic}{ClassicMethod}
    \KwIn{
    $\mathcal{T} = \{T_{\left( i\right)} : i= 1, \dots n\}, \quad \textit{Ordered life data}$ \\
    \hspace{.05cm} $B, I, P \in \mathbb{N} \quad \textit{Numbers of bootstraps, MCMC iterations, \& parameters}$
    }
    \KwOut{$\mathrm{E_{\hat{\theta}}, \sigma^2_{\hat{\theta}}, V_{\hat{\theta}} }, ~ \mathrm{E_{\Tilde{\theta}}, \sigma^2_{\Tilde{\theta}}, V_{\Tilde{\theta}}}\quad \textit{Bootstrap (Classic) \& posterior (MCMC) parameter estimates}$
    }
    $\textit{Let }\mathrm{A_1}[B, 2] \quad \textit{Array of $B\times 2$ doubles}$ \newline
    \For{$b\gets 1 ~ \KwTo ~ B$} {
    $\mathrm{S_b} \gets \BootSample{$\mathcal{T}$}$\;
    $\mathrm{\hat{\theta}_{b}}\gets \EstimateWeibull{$S_b$}\quad \textit{MLE, Moments, OLS methods}$
    $\mathrm{A_1\left[b , ~ \right] \gets \hat{\theta}_b}\quad \textit{Each bootstrap produces P parameter results}$
    }
    $\mathrm{ClassicValues} \gets \SummarizeClassic{$A_1$} \quad \mathrm{\left[\mathrm{E_{\hat{\theta}}, \sigma^2_{\hat{\theta}}, V_{\hat{\theta}} }\right]}$ \newline
    \If{\Classic}
    {
     \Return $\mathrm{ClassicValues}$\;
    }
    $\textit{Let } \mathrm{A_2[\mathrm{I, ~2}]} \quad \textit{ Array of $I \times 2$ doubles}$ \newline
    $\MCMC{\dots} \quad \textit{Specify MCMC sampling model}$
    $\gamma_0 \gets \InitHPar{$E_{\hat{\theta}}, \sigma_{\hat{\theta}}$}$\; 
    $\mathrm{d} \gets \ModelData{$\mathcal{T}$}$\;
    \For{$i\gets 1 ~ \KwTo ~ I$} {
    $\mathrm{A_2\left[i, ~ \right]} \gets \PostSampler{$\sim\MCMC, \gamma_0$, $\mathrm{d}, \dots$}$
    }
    $\mathrm{PosteriorValues} \gets \SummarizePosterior{$A_2$},~ \left[\mathrm{E_{\Tilde{\theta}}, \sigma^2_{\Tilde{\theta}}, V_{\Tilde{\theta}}}\right]$\newline
    \Return $\mathrm{PosteriorValues}$\;
\end{algorithm}
\end{minipage}

\begin{figure}[H]
	\centerline{\includegraphics[width = 5in]{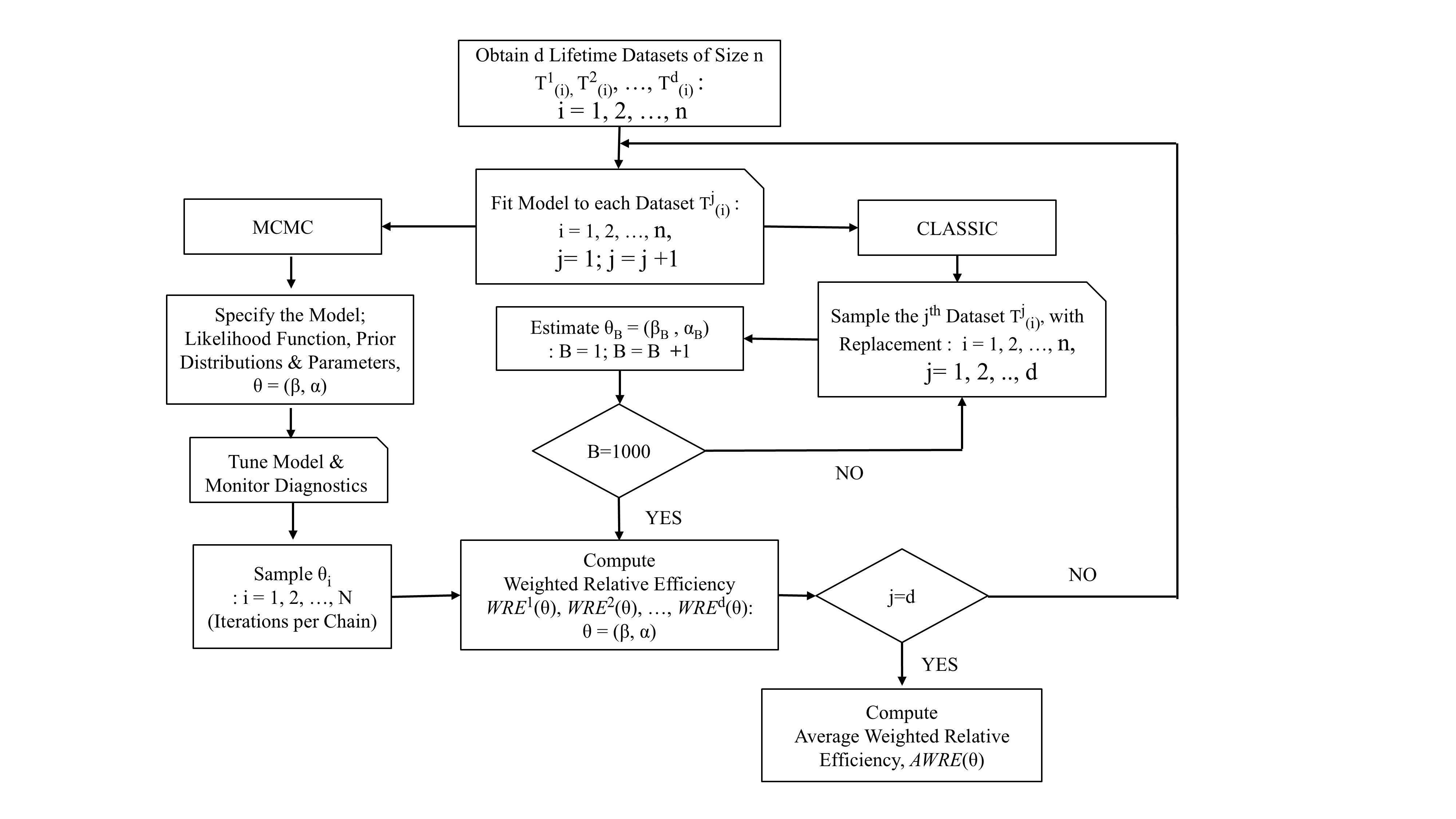} }
	\caption{\label{fig: ComputationFramework} The flow chart illustrates the computational framework used in this research to develop and compare the efficiency of the classic and MCMC models in estimating parameters of the $\mathit{Weibull}$ distribution. The average weighted relative efficiency (AWRE) is the average of the weighted relative efficiency (WRE) for lifetime datasets with either decreasing or increasing hazard rate properties.}
\end{figure}


\subsection{Analysis of Framework and the Algorithm: Model Selections and Priorities }

Figure \ref{fig: Graph_EfficiencySmallAndLargeSamples} illustrates efficiency trends when fitting the Weibull distribution model to small and large sample sizes of lifetime data, comparing the top MCMC and classical methods. Table \ref{tab: ContrastMethodEfficiencies} contrasts the behavior of classical and MCMC methods under small and large sample conditions:

\begin{figure}[!htb]	
		\centerline{\includegraphics[width=1.5\linewidth, height= .6\linewidth, keepaspectratio]{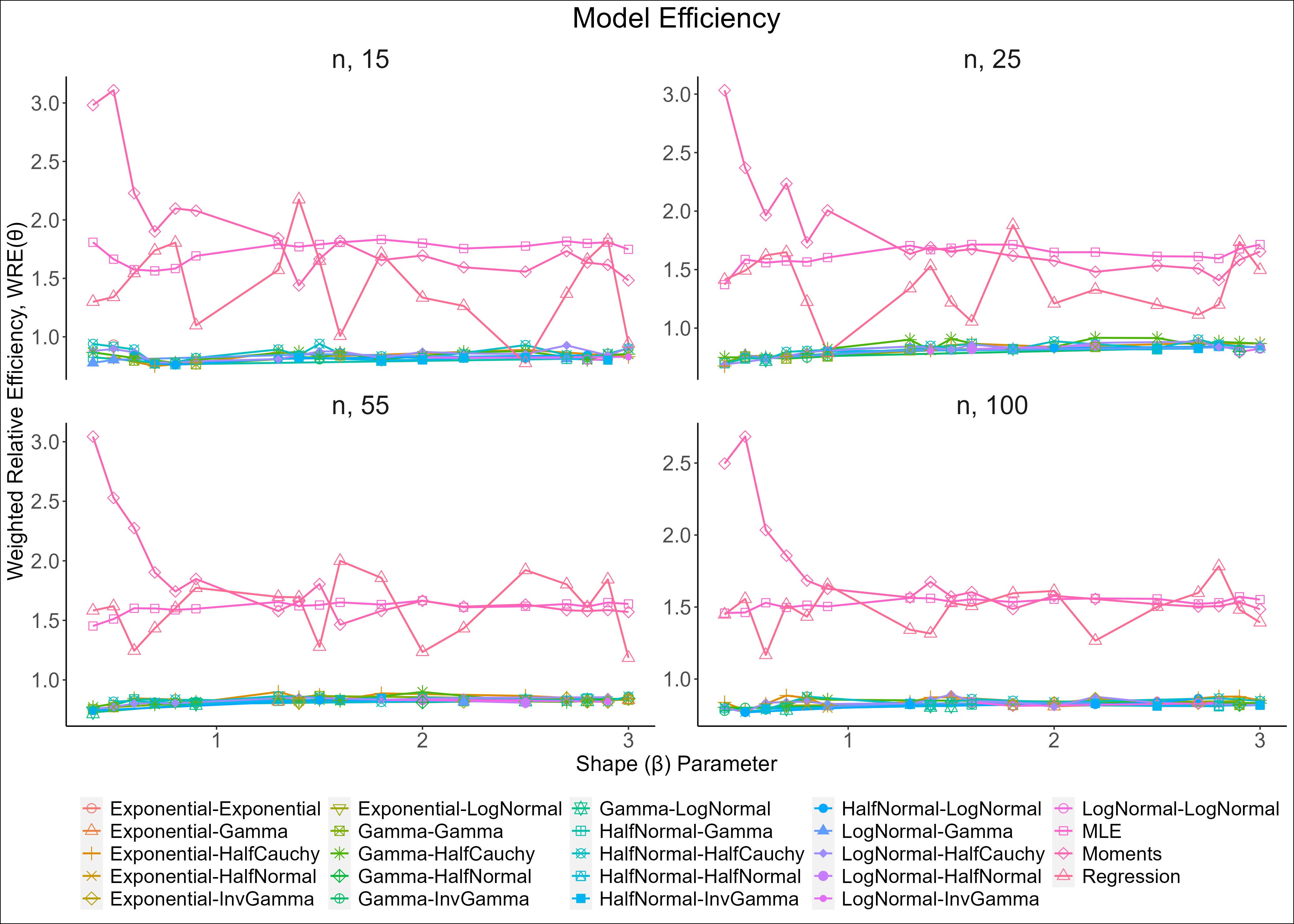}  }
		\caption{ \label{fig: Graph_EfficiencySmallAndLargeSamples}Efficiency trends for small and large samples of Weibull-distributed data with monotone hazard rates. MCMC methods consistently outperform classical methods.
		}	
\end{figure}
		
 \begin{table}[h!]\label{tab: ContrastMethodEfficiencies}
	\caption{Comparison of method efficiencies for small and large sample sizes}
	\fontsize{8}{10}\selectfont
	\centering
	\begin{tabular}{|l|l|l|}
	\hline
	\textbf{Method}       			& \textbf{Efficiency (Small Samples)} 			& \textbf{Efficiency (Large Samples)} \\ \hline
	\textbf{Moments Approach} 	& Improves as shape parameter ($\beta$) 		& Consistent efficiency for datasets\\ 
									& increases to 1, then remains  					& with increasing hazard rate  \\
							    	     	& constant for $\beta > 1$       				&   properties                     \\ \hline
	\textbf{MLE Method}         		& Constant efficiency for datasets 				& More consistent efficiency  \\ 
						         	    	& with both decreasing and      					& for datasets with increasing  \\
						          	    	& increasing hazard rates properties				& hazard rate properties			\\ \hline
	\textbf{Regression Method}    	& Unpredictable efficiency from  				& Less unpredictability compared \\ 
									& sample to sample								& to small samples				  \\ \hline
	\textbf{MCMC Methods} 	   	& Consistently efficient for datasets 				& Superior efficiency compared to \\ 
									&with both decreasing and						& classical methods					\\
									& increasing hazard rate properties				&										\\ \hline
	\end{tabular}
\end{table}

Classical methods such as the Moments approach improve in efficiency as the shape parameter (\(\beta\)) approaches 1, remaining constant for \(\beta >1\). The MLE method exhibits consistent efficiency, while the Regression method shows sample-to-sample variability. In contrast, MCMC methods demonstrate consistent efficiency across all datasets, offering better reliability.

In Bayesian inference, the choice of prior distributions significantly influences the posterior distribution and the resulting inferences. For Weibull-distributed survival data, the efficiency of the Bayesian model relates to the properties of the priors selected for the shape (\(\beta\)) and scale (\(\alpha\)) parameters. Due to its heavy-tailed nature, the HalfCauchy prior is particularly effective for wide-scale parameters. At the same time, the Exponential, LogNormal, and Gamma priors constrain the shape parameter, regularizing it based on expected hazard rate behavior.

For large sample sizes with increasing hazard rate properties, we recommend using Exponential, LogNormal, Gamma, or HalfNormal priors (in that order) for the shape parameter and a HalfCauchy prior for the scale parameter. For datasets with decreasing hazard rate properties, we recommend the Exponential, HalfNormal, Gamma, or LogNormal priors (in that order) for the shape parameter and HalfCauchy for the scale parameter. LogNormal or HalfNormal priors for the shape parameter and Gamma priors for the scale parameter also deliver desirable efficiency.

MCMC estimators with carefully chosen priors achieve asymptotic efficiency for large sample sizes,  meaning that the variance of the MCMC estimators converges to the Cramér-Rao lower bound, outperforming classical methods, outperforming classical methods, as shown in Tables \ref{Tab: WeightedRelativeEfficiency_SmallSamples_15Units} and \ref{Tab: WeightedRelativeEfficiency_SmallSamples_100Units}. 
The HalfNormal-HalfCauchy combination yields the highest efficiency across datasets, as shown in Tables \ref{Tab: WeightedRelativeEfficiency_SmallSamples_15Units}. 
Table \ref{Tab: AverageWeightedRelativeEfficiency_SmallSamples} illustrates that the Exponential-Exponential prior combination is most effective for small samples (around 15 units) with decreasing hazard rates, and HalfNormal-HalfCauchy priors are optimal for increasing hazard rates. It also highlights other effective MCMC prior combinations, all demonstrating superior performance compared to classical methods when applied to Weibull-distributed data. 

For large sample sizes, the Exponential-HalfCauchy combination achieves the best performance for the shape parameter around 1.4, and the LogNormal-HalfCauchy and HalfNormal-HalfCauchy combinations perform best for shape parameters around 1.5 and 1.6, respectively, as shown in Table \ref{Tab: WeightedRelativeEfficiency_SmallSamples_100Units}. Table \ref{Tab: AverageWeightedRelativeEfficiency_LargeSamples} compares the AWRE of top MCMC and classical methods for large samples, highlighting that MCMC methods are consistently more efficient.

Empirical studies on simulated datasets demonstrate that MCMC methods outperform classical approaches when applied to models addressing increasing and decreasing hazard rates. This superiority is attributed to MCMC's capacity to capture the underlying data structure better and reduce posterior variance through various prior combinations. For large sample sizes, the Exponential-HalfCauchy achieves the best performance in terms of the AWRE for both decreasing and increasing hazard rates. Conversely, under small sample sizes, the HalfNormal-HalfCauchy and Gamma-HalfCauchy perform best for the decreasing and increasing hazard rates, respectively.


\begin{table}[ht]
\centering
\caption{\label{Tab: WeightedRelativeEfficiency_SmallSamples_15Units} 
	Weighted relative efficiency (WRE) of the top \(5\) MCMC priors vs. classical methods for \(n=15 \), comparing various prior combinations for the Weibull shape and scale parameters.
}
\fontsize{8}{10}\selectfont
\begin{tabular}[ht]{rrlllr}
\toprule
\multicolumn{3}{c}{ } & \multicolumn{2}{c}{Prior Distribution} & \multicolumn{1}{c}{Model Efficiency} \\
\cmidrule(l{3pt}r{3pt}){4-5} \cmidrule(l{3pt}r{3pt}){6-6}
n & $\text{shape, }\beta$ & method & $\text{shape, }\beta$ & $\text{scale, }\alpha$ & $WRE(\theta)$\\
\midrule
 &  & Moments & - & - & 2.982\\

 &  & MLE & - & - & 1.808\\

 &  & Regression & - & - & 1.300\\

 &  &  & HalfNormal & HalfCauchy & 0.939\\

 &  &  & LogNormal & HalfCauchy & 0.880\\

 &  &  & Gamma & HalfCauchy & 0.867\\

 &  &  & HalfNormal & Gamma & 0.826\\

 & \multirow[t]{-6.50}{*}[3\dimexpr\aboverulesep+\belowrulesep+\cmidrulewidth]{\raggedleft\arraybackslash 0.4} & \multirow[t]{-5}{*}{\raggedright\arraybackslash MCMC} & LogNormal & Gamma & 0.777\\

 &  & Moments & - & - & 3.109\\

 &  & MLE & - & - & 1.664\\

 &  & Regression & - & - & 1.340\\

 &  &  & Exponential & Exponential & 0.934\\

 &  &  & HalfNormal & HalfCauchy & 0.917\\

 &  &  & LogNormal & HalfCauchy & 0.896\\

 &  &  & Exponential & HalfCauchy & 0.823\\

 & \multirow[t]{-6.50}{*}[3\dimexpr\aboverulesep+\belowrulesep+\cmidrulewidth]{\raggedleft\arraybackslash 0.5} & \multirow[t]{-5}{*}{\raggedright\arraybackslash MCMC} & LogNormal & Gamma & 0.803\\

 &  & Moments & - & - & 2.229\\

 &  & MLE & - & - & 1.575\\

 &  & Regression & - & - & 1.545\\

 &  &  & HalfNormal & HalfCauchy & 0.892\\

 &  &  & LogNormal & HalfCauchy & 0.867\\

 &  &  & HalfNormal & LogNormal & 0.840\\

 &  &  & Gamma & HalfCauchy & 0.819\\

\multirow[t]{-18.60}{*}[11\dimexpr\aboverulesep+\belowrulesep+\cmidrulewidth]{\raggedleft\arraybackslash 15} & \multirow[t]{-6.58}{*}[3\dimexpr\aboverulesep+\belowrulesep+\cmidrulewidth]{\raggedleft\arraybackslash 0.6} & \multirow[t]{-5}{*}{\raggedright\arraybackslash MCMC} & Gamma & Gamma & 0.797\\
\bottomrule
\end{tabular}
\end{table}


\begin{table}[ht]
\centering
\caption{\label{Tab: AverageWeightedRelativeEfficiency_SmallSamples}Average weighted relative efficiency (AWRE) of the top 5 MCMC priors vs. classical methods for \( n = 15 \) and \( n = 25 \), averaging WRE across lifetime datasets with varying hazard rates. Overall, MCMC methods are more efficient.
}
\centering
\fontsize{8}{10}\selectfont
\begin{tabular}[ht]{rllllr}
\toprule
\multicolumn{3}{c}{ } & \multicolumn{2}{c}{Prior Distribution} & \multicolumn{1}{c}{Model Efficiency} \\
\cmidrule(l{3pt}r{3pt}){4-5} \cmidrule(l{3pt}r{3pt}){6-6}
n & hazard & method &  $\text{shape, }\beta$ & $\text{scale, }\alpha$ & $AWRE(\theta)$\\
\midrule
 &  & Moments & - & - & 2.399\\

 &  & MLE & - & - & 1.648\\

 &  & Regression & - & - & 1.471\\

 &  &  & Exponential & Exponential & 0.934\\

 &  &  & HalfNormal & HalfCauchy & 0.853\\

 &  &  & LogNormal & HalfCauchy & 0.833\\

 &  &  & Gamma & HalfCauchy & 0.800\\

 & \multirow[t]{-6.5}{*}[3\dimexpr\aboverulesep+\belowrulesep+\cmidrulewidth]{\raggedright\arraybackslash DHR} & \multirow[t]{-5}{*}{\raggedright\arraybackslash MCMC} & HalfNormal & LogNormal & 0.782\\

 &  & MLE & - & - & 1.791\\

 &  & Moments & - & - & 1.645\\

 &  & Regression & - & - & 1.441\\

 &  &  & HalfNormal & HalfCauchy & 0.862\\

 &  &  & LogNormal & HalfCauchy & 0.856\\

\multirow[t]{-10.60}{*}[7\dimexpr\aboverulesep+\belowrulesep+\cmidrulewidth]{\raggedleft\arraybackslash 15} & \multirow[t]{-4.50}{*}[3\dimexpr\aboverulesep+\belowrulesep+\cmidrulewidth]{\raggedright\arraybackslash IHR} & \multirow[t]{-3}{*}{\raggedright\arraybackslash MCMC} & Gamma & HalfCauchy & 0.845\\
\cmidrule{1-6}
 &  & Moments & - & - & 2.224\\

 &  & MLE & - & - & 1.543\\

 &  & Regression & - & - & 1.365\\

 &  &  & Gamma & HalfCauchy & 0.762\\

 &  &  & LogNormal & HalfCauchy & 0.749\\

 &  &  & HalfNormal & HalfCauchy & 0.743\\

 &  &  & Gamma & Gamma & 0.738\\

 & \multirow[t]{-6.50}{*}[3\dimexpr\aboverulesep+\belowrulesep+\cmidrulewidth]{\raggedright\arraybackslash DHR} & \multirow[t]{-5}{*}{\raggedright\arraybackslash MCMC} & LogNormal & Gamma & 0.729\\

 &  & MLE & - & - & 1.666\\

 &  & Moments & - & - & 1.585\\

 &  & Regression & - & - & 1.360\\

 &  &  & Gamma & HalfCauchy & 0.872\\

 &  &  & HalfNormal & HalfCauchy & 0.837\\

\multirow[t]{-10.60}{*}[7\dimexpr\aboverulesep+\belowrulesep+\cmidrulewidth]{\raggedleft\arraybackslash 25} & \multirow[t]{-4.50}{*}[3\dimexpr\aboverulesep+\belowrulesep+\cmidrulewidth]{\raggedright\arraybackslash IHR} & \multirow[t]{-3}{*}{\raggedright\arraybackslash MCMC} & LogNormal & HalfCauchy & 0.833\\
\bottomrule
\end{tabular}
\end{table}


\begin{table}[ht]
	\scriptsize
\centering
\caption{\label{Tab: WeightedRelativeEfficiency_SmallSamples_100Units} Weighted relative efficiency of the top \( 5 \) MCMC priors vs. classical methods for \( n = 100 \), showing MCMC methods' consistent superior efficiency over classical methods. 
	}
\centering
\fontsize{8}{10}\selectfont
\begin{tabular}[t]{rrlllr}
\toprule
\multicolumn{3}{c}{ } & \multicolumn{2}{c}{Prior Distribution} & \multicolumn{1}{c}{Model Efficiency} \\
\cmidrule(l{3pt}r{3pt}){4-5} \cmidrule(l{3pt}r{3pt}){6-6}
n & shape & method & $\text{shape, }\beta$ & $\text{scale, }\alpha$ & $WRE(\theta)$\\
\midrule
 &  & Moments & - & - & 1.675\\

 &  & MLE & - & - & 1.561\\

 &  & Regression & - & - & 1.316\\

 &  &  & Exponential & HalfCauchy & 0.873\\

 &  &  & LogNormal & HalfCauchy & 0.855\\

 &  &  & HalfNormal & Gamma & 0.821\\

 &  &  & HalfNormal & HalfCauchy & 0.817\\

 & \multirow[t]{-6.50}{*}[3\dimexpr\aboverulesep+\belowrulesep+\cmidrulewidth]{\raggedleft\arraybackslash 1.4} & \multirow[t]{-5}{*}{\raggedright\arraybackslash MCMC} & Gamma & LogNormal & 0.814\\

 &  & Moments & - & - & 1.573\\

 &  & MLE & - & - & 1.538\\

 &  & Regression & - & - & 1.528\\

 &  &  & LogNormal & HalfCauchy & 0.893\\

 &  &  & Exponential & HalfCauchy & 0.874\\

 &  &  & Exponential & Gamma & 0.859\\

 &  &  & HalfNormal & HalfCauchy & 0.838\\

 & \multirow[t]{-6.50}{*}[3\dimexpr\aboverulesep+\belowrulesep+\cmidrulewidth]{\raggedleft\arraybackslash 1.5} & \multirow[t]{-5}{*}{\raggedright\arraybackslash MCMC} & Gamma & LogNormal & 0.812\\

 &  & Moments & - & - & 1.604\\

 &  & MLE & - & - & 1.554\\

 &  & Regression & - & - & 1.508\\

 &  &  & HalfNormal & HalfCauchy & 0.862\\

 &  &  & LogNormal & HalfCauchy & 0.846\\

 &  &  & LogNormal & LogNormal & 0.838\\

 &  &  & Gamma & inverseGamma & 0.828\\

\multirow[t]{-18.60}{*}[11\dimexpr\aboverulesep+\belowrulesep+\cmidrulewidth]{\raggedleft\arraybackslash 100} & \multirow[t]{-6.50}{*}[3\dimexpr\aboverulesep+\belowrulesep+\cmidrulewidth]{\raggedleft\arraybackslash 1.6} & \multirow[t]{-5}{*}{\raggedright\arraybackslash MCMC} & HalfNormal & Gamma & 0.823\\
\bottomrule
\end{tabular}
\end{table}


\begin{table}[htbp]
\centering
\caption{\label{Tab: AverageWeightedRelativeEfficiency_LargeSamples} Average weighted relative efficiency of the top 5 MCMC priors vs. classical methods for \( n = 55 \) and \( n = 100 \), averaging WRE across lifetime datasets with varying hazard rates.}
\fontsize{8}{10}\selectfont
\begin{tabular}[t]{rllllr}
\toprule
\multicolumn{3}{c}{ } & \multicolumn{2}{c}{Prior Distribution} & \multicolumn{1}{c}{Model Efficiency} \\
\cmidrule(l{3pt}r{3pt}){4-5} \cmidrule(l{3pt}r{3pt}){6-6}
n & hazard & method & $\text{shape, }\beta$ & $\text{scale, }\alpha$ & $AWRE(\theta)$\\
\midrule
 &  & Moments & - & - & 2.223\\

 &  & MLE & - & - & 1.559\\

 &  & Regression & - & - & 1.543\\

 &  &  & HalfNormal & HalfCauchy & 0.802\\

 &  &  & Exponential & HalfCauchy & 0.795\\

 &  &  & Gamma & HalfCauchy & 0.791\\

 &  &  & LogNormal & HalfCauchy & 0.783\\

 & \multirow[t]{-6.50}{*}[3\dimexpr\aboverulesep+\belowrulesep+\cmidrulewidth]{\raggedright\arraybackslash DHR} & \multirow[t]{-5}{*}{\raggedright\arraybackslash MCMC} & HalfNormal & Gamma & 0.782\\

 &  & MLE & - & - & 1.635\\

 &  & Regression & - & - & 1.630\\

 &  & Moments & - & - & 1.610\\

 &  &  & Exponential & HalfCauchy & 0.848\\

 &  &  & Gamma & HalfCauchy & 0.827\\

\multirow[t]{-10.60}{*}[7\dimexpr\aboverulesep+\belowrulesep+\cmidrulewidth]{\raggedleft\arraybackslash 55} & \multirow[t]{-4.50}{*}[3\dimexpr\aboverulesep+\belowrulesep+\cmidrulewidth]{\raggedright\arraybackslash IHR} & \multirow[t]{-3}{*}{\raggedright\arraybackslash MCMC} & HalfNormal & HalfCauchy & 0.824\\
\cmidrule{1-6}
 &  & Moments & - & - & 2.064\\

 &  & MLE & - & - & 1.494\\

 &  & Regression & - & - & 1.462\\

 &  &  & Exponential & HalfCauchy & 0.836\\

 &  &  & LogNormal & HalfCauchy & 0.819\\

 &  &  & Gamma & HalfCauchy & 0.816\\

 &  &  & HalfNormal & HalfCauchy & 0.797\\

 & \multirow[t]{-6.50}{*}[3\dimexpr\aboverulesep+\belowrulesep+\cmidrulewidth]{\raggedright\arraybackslash DHR} & \multirow[t]{-5}{*}{\raggedright\arraybackslash MCMC} & LogNormal & Gamma & 0.778\\

 &  & MLE & - & - & 1.550\\

 &  & Moments & - & - & 1.549\\

 &  & Regression & - & - & 1.494\\

 &  &  & Exponential & HalfCauchy & 0.845\\

 &  &  & LogNormal & HalfCauchy & 0.835\\

\multirow[t]{-10.60}{*}[7\dimexpr\aboverulesep+\belowrulesep+\cmidrulewidth]{\raggedleft\arraybackslash 100} & \multirow[t]{-4.50}{*}[3\dimexpr\aboverulesep+\belowrulesep+\cmidrulewidth]{\raggedright\arraybackslash IHR} & \multirow[t]{-3}{*}{\raggedright\arraybackslash MCMC} & HalfNormal & HalfCauchy & 0.831\\
\bottomrule
\end{tabular}
\end{table}

MCMC estimation is computationally slower than the MLE due to model compilation and iterative sampling, with runtime increasing linearly with dataset size. However, MCMC incorporates prior information, making it more robust in small samples and providing stable estimates and complete posterior distributions for uncertainty quantification.

Conversely, MLE is computationally efficient and asymptotically unbiased, but it can be unstable with small samples or complex likelihood surfaces, as shown in Tables \ref{Tab: WeightedRelativeEfficiency_SmallSamples_15Units}, \ref{Tab: AverageWeightedRelativeEfficiency_SmallSamples}, \ref{Tab: WeightedRelativeEfficiency_SmallSamples_100Units}, and \ref{Tab: AverageWeightedRelativeEfficiency_LargeSamples}. Despite its slower speed, MCMC offers more robustness and flexibility for hierarchical models and prior-based inference.

We propose a graphical posterior predictive check to assess model fit, particularly in small sample sizes, by comparing observed data to simulated data from the posterior predictive distribution. A well-fitting model should produce simulated data that closely resembles the observed data, indicating an adequate representation of the underlying data-generating process.

\newpage
\clearpage

\section{Application: Prostate Cancer Survival Data}\label{ProstCancerResults}

We illustrate the proposed Algorithm \ref{alg: WeibullAlgorithmcovers} using the prostate cancer dataset from Table \ref{tab: SurvivalTimesData}, initially presented by \cite{hollander1979testing}. The dataset comprises survival times for 90 prostate cancer patients, recorded from diagnosis $(t = 0)$. Three individuals with zero survival time are excluded, leaving 87 observations for analysis. The \textit{Epstein test} confirms an increasing hazard rate, justifying the use of the Weibull model.
\begin{table}[ht]
\caption{\label{tab: SurvivalTimesData} Survival times (months) of 90 prostate cancer patients.}
\centering
\begin{tabular}{ccccccccccccccc}
\toprule
0 & 0 & 0 & 2 & 3 & 4 & 6 & 7 & 7 & 8 & 9 & 9 & 11 & 11 & 11 \\
12 & 12 & 12 & 15 & 15 & 16 & 16 & 16 & 17 & 17 & 18 & 19 & 19 & 20 & 21\\
22 & 22 & 23 & 24 & 25 & 25 & 26 & 26 & 26 & 27 & 27 & 28 & 28 & 29 & 29\\
30 & 31 & 32 & 32 & 32 & 33 & 33 & 34 & 35 & 36 & 37 & 37 & 38 & 40 & 41\\
41 & 42 & 42 & 43 & 45 & 45 & 45 & 46 & 47 & 47 & 48 & 48 & 51 & 53 & 53\\
54 & 54 & 57 & 60 & 61 & 62 & 62 & 67 & 69 & 87 & 97 & 97 & 100 & 145 & 158\\
\bottomrule
\end{tabular}
\end{table}

Table \ref{tab: OutcomesProstateCancerDt} presents the top ten prior combinations for the Weibull shape and scale parameters, comparing them against classical estimation methods. MCMC consistently outperforms traditional approaches, with the LogNormal-Gamma prior yielding the most accurate results.


\begin{table}[ht]
\centering
\caption{\label{tab: OutcomesProstateCancerDt} Comparison of the efficiency of top \(10\) MCMC priors vs. classical methods for estimating Weibull parameters in prostate cancer data. 
The actual parameter values are $\beta= 1.43$ and $\alpha =40.34$, 
pooled from the results of the 27 fitted models.}
\centering
\fontsize{8}{10}\selectfont
\begin{tabular}[t]{@{}ccllccc@{}}
\toprule
\multicolumn{2}{c}{ } & \multicolumn{2}{c}{Prior Distribution} & \multicolumn{2}{c}{Estimated Parameters} & \multicolumn{1}{c}{Model Efficiency} \\
\cmidrule(l{3pt}r{3pt}){3-4} \cmidrule(l{3pt}r{3pt}){5-6} \cmidrule(l{3pt}r{3pt}){7-7}
n & method & shape, $\beta$ & scale,  $\alpha$ & shape,  $\hat{\beta}$ & scale,  $\hat{\alpha}$ & $WRE(\hat{\theta})$\\
\midrule
 & Regression & - & - & 1.583 & 40.095 & 1.820\\

 & MLE & - & - & 1.457 & 40.391 & 1.566\\

 & Moments & - & - & 1.367 & 39.742 & 1.330\\

 &  & LogNormal & Gamma & 1.433 & 40.360 & 0.854\\

 &  & Gamma & HalfCauchy & 1.436 & 40.489 & 0.845\\

 &  & Exponential & HalfNormal & 1.417 & 40.402 & 0.832\\

 &  & Exponential & HalfCauchy & 1.416 & 40.368 & 0.831\\

 &  & HalfNormal & HalfCauchy & 1.440 & 40.480 & 0.829\\

 &  & Gamma & HalfNormal & 1.434 & 40.404 & 0.816\\

 &  & HalfNormal & LogNormal & 1.436 & 40.366 & 0.808\\

 &  & Exponential & inverseGamma & 1.418 & 40.240 & 0.805\\

 &  & Exponential & LogNormal & 1.413 & 40.232 & 0.798\\

\multirow[t]{-12}{*}[3\dimexpr\aboverulesep+\belowrulesep+\cmidrulewidth]{\raggedleft\arraybackslash 87} & \multirow[t]{-10}{*}{\raggedright\arraybackslash MCMC} & HalfNormal & HalfNormal & 1.439 & 40.459 & 0.795\\
\bottomrule
\end{tabular}
\end{table}

Figure \ref{fig: PlotErrBarProstateData} illustrates the uncertainty in shape parameter estimates, showing that MCMC methods provide more precise estimates than classical methods. As a result, using the MCMC techniques proposed here allows for more confident computation of key survival analysis metrics, such as the Mean Residual Lifetime (MRL) of prostate cancer patients. Figure \ref{fig: MRLPlotProstateData} presents the MRL estimates. MCMC-based methods yield less biased estimates than MLE and regression, which underestimate and overestimate MRL, respectively.

\begin{figure}[ht]
\centerline{\includegraphics[width=1.5\linewidth, height= .6\linewidth, keepaspectratio]{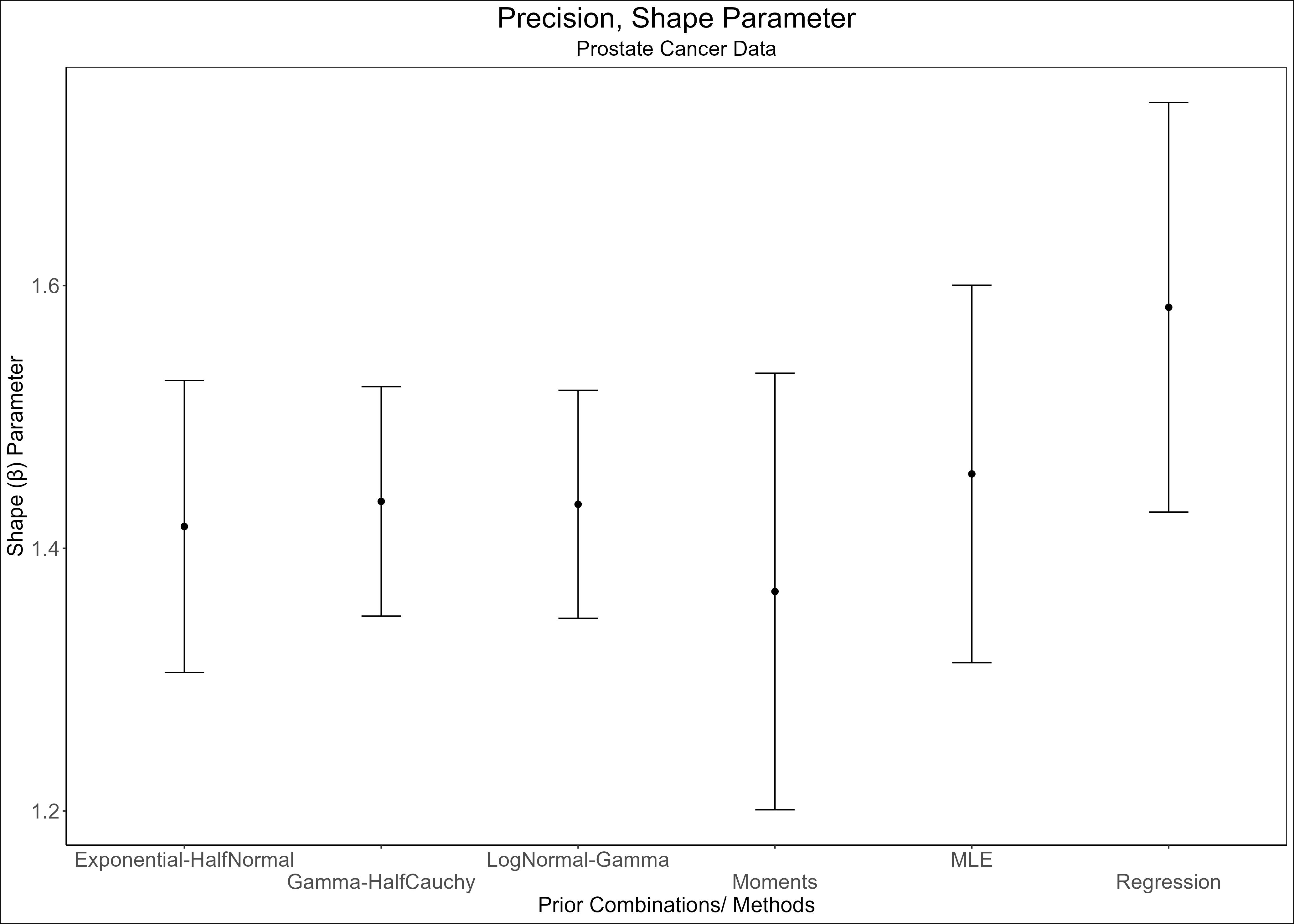}  }
\caption{Comparison of the uncertainty of the top 3 MCMC-based models vs. classical models for the shape parameter of Weibull-distributed prostate cancer survival times. We compute the interval as two empirical deviances from the actual parameter value obtained using an integrated approach where we pool the results from all methods under the study. \label{fig: PlotErrBarProstateData}} 
\end{figure}

\begin{figure}[ht]
\centerline{\includegraphics[width=1.5\linewidth, height= .6\linewidth, keepaspectratio]{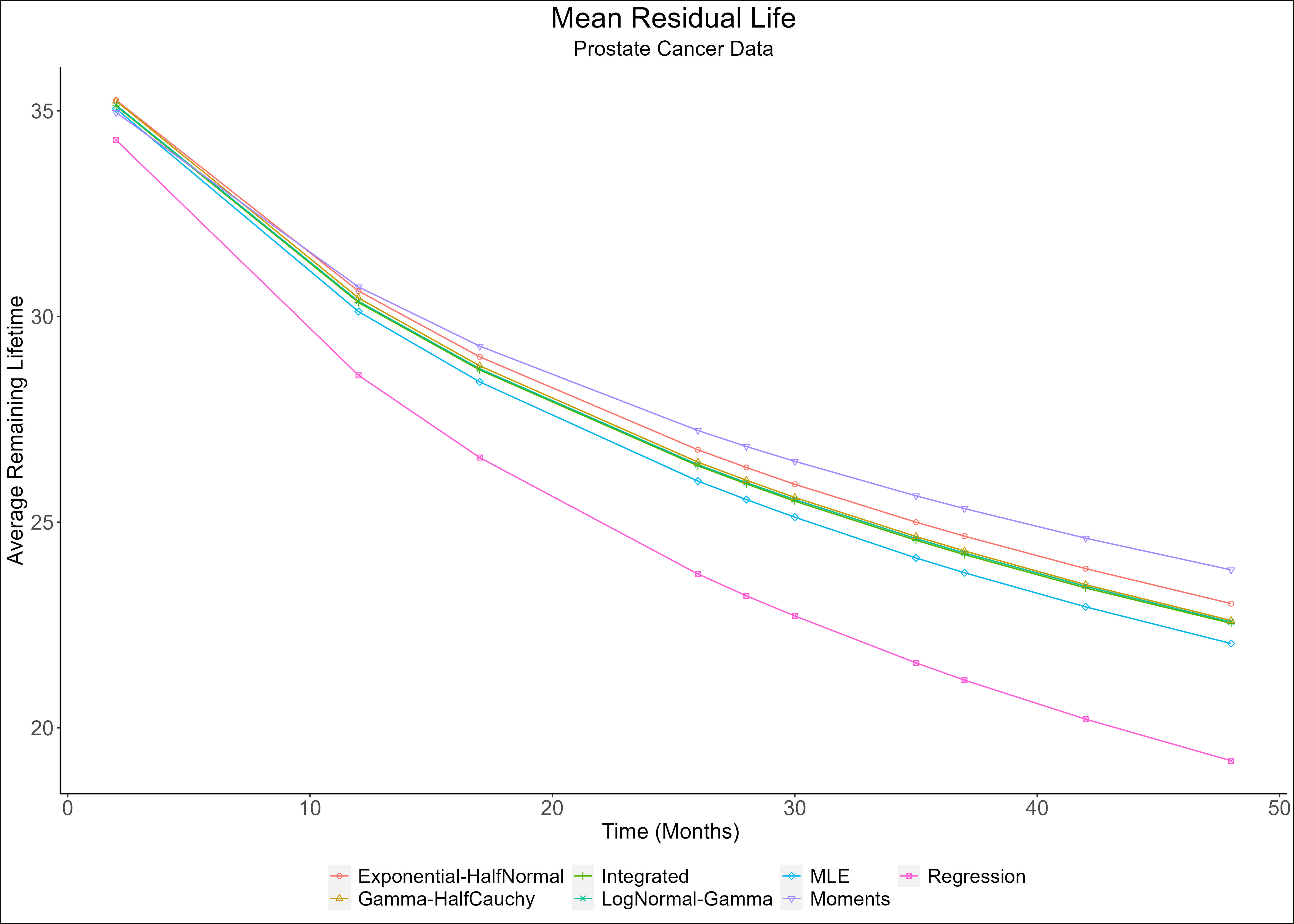}  }
\caption{Mean residual lifetime plot comparing the top 3 MCMC-based models, the classic Weibull model, and the Integrated model for a random sample of 10 prostate cancer survival times. The MCMC methods for calculating the mean residual life are more robust than traditional methods. Regression and MLE methods underestimate the population mean remaining life, while the Moments overestimate the parameter.\label{fig: MRLPlotProstateData} }
\end{figure}
Table \ref{tab: RemainingLifetime} quantifies the deviation of each method from the integrated MRL model, which pools estimates from all \(27\) fitted models. The LogNormal-Gamma and Gamma-HalfCauchy priors yield the most reliable MRL estimates. MCMC methods slightly overestimate MRL, while MLE underestimates it, with regression showing the most significant bias, particularly for older patients.
\begin{sidewaystable}[htbp]
\centering
\caption{\label{tab: RemainingLifetime} A comparison of the mean residual lifetime estimates of the top 3 MCMC-based models, the classical Weibull models, and the Integrated model for a random sample of 10 prostate cancer survival times.
The integrated model uses the actual parameter values $\beta = 1.43$ and $\alpha =40.34$, pooled from the results of all the models under the study.
MCMC and Classic results represent percent deviations from the true mean residual lifetime of the integrated model.}
\centering
\fontsize{8}{10}\selectfont
\begin{tabular}[t]{@{}cccccccc@{}}
\toprule
\multicolumn{2}{c}{ } & \multicolumn{3}{c}{MCMC} & \multicolumn{3}{c}{Classics} \\
\cmidrule(l{3pt}r{3pt}){3-5} \cmidrule(l{3pt}r{3pt}){6-8}
time & integrated & LogNormal-Gamma & Gamma-HalfCauchy & Exponential-HalfNormal & MLE & Regression & Moments\\
\hline
2 & 35.12 & +0.06\% & +0.34\% & +0.40\% & -0.20\% & -2.36\% & -0.46\%\\

12 & 30.34 &  +0.10\% & +0.36\% & +0.92\% & -0.73\% & -5.83\% & +1.25\%\\

17 & 28.70 & +0.10\% & +0.35\% & +1.11\% & -1.01\% & -7.42\% & +2.02\%\\

26 & 26.37 & +0.11\% & +0.34\% & +1.48\% & -1.40\% & -9.97\% & +3.26\%\\

28 & 25.93 & +0.12\% & +0.35\% & +1.54\% & -1.47\% & -10.49\% & +3.51\%\\

30 & 25.51 & +0.16\% & +0.35\% & +1.61\% & -1.53\% & -10.94\% & +3.80\%\\

35 & 24.56 & +0.16\% &  +0.37\% & +1.79\% & -1.75\% & -12.13\% & +4.40\%\\

37 & 24.21 & +0.17\% & +0.37\% & +1.86\% & -1.82\% & -12.60\% & +4.63\%\\

42 & 23.40 & +0.17\% & +0.34\% & +2.01\% & -1.97\% & -13.63\% & +5.17\%\\

48 & 22.54 & +0.13\% & +0.31\% & +2.13\% & -2.17\% & -14.82\% & +5.77\%\\
\bottomrule
\end{tabular}
\end{sidewaystable}

LogNormal-Gamma and Gamma-HalfCauchy priors offer more dependable estimates of Weibull parameters and MRL than traditional methods. In contrast to MLE, which relies on point estimates sensitive to sample variability, MCMC effectively utilizes prior information to mitigate bias and enhance efficiency. This advantage is particularly valuable when modeling survival data that exhibit complex hazard rate structures.

\newpage
\clearpage
\section{Conclusion and Discussion}\label{Conclussions_study_findings}

We introduced an adaptive MCMC modeling technique for lifetime data, utilizing a concise set of default prior distributions. This approach enables precise characterization of key survival time measures, such as the mean remaining lifetime, particularly for patients with terminal illnesses like prostate cancer. Our findings show that prior selection depends on sample size and hazard rate patterns. However, our algorithm adaptively identifies the most effective shape and scale prior combinations for any given dataset.

\begin{center}
\begin{threeparttable}[!h]
\caption{\label{tab: RecomendPriorsDHR} A list of recommended combinations of prior distributions for small and large Weibull-distributed data sets with decreasing hazard rate properties. The priors are ranked based on the AWRE.}
\fontsize{8}{10}\selectfont
\begin{tabular*}{\textwidth}{@{\extracolsep\fill}llll}
\toprule
\multicolumn{2}{c}{n = (15, 25)} & \multicolumn{2}{c}{n = (55, 100)} \\
\cmidrule(l{3pt}r{3pt}){1-2} \cmidrule(l{3pt}r{3pt}){3-4}
Shape, $\beta$ & Scale, $\alpha$ & Shape, $\beta$ & Scale, $\alpha$\\
\midrule
HalfNormal &  & Exponential & \\

LogNormal &  & HalfNormal & \\

Gamma & \multirow[t]{-3}{*}{\raggedright\arraybackslash  \footnotemark[1] HalfCauchy} & Gamma & \\

HalfNormal & LogNormal & LogNormal & \multirow[t]{-4}{*}{\raggedright\arraybackslash  \footnotemark[1] HalfCauchy}\\
\cmidrule{1-4}
Gamma &  & LogNormal & \\

LogNormal & \multirow[t]{-2}{*}{\raggedright\arraybackslash Gamma} & HalfNormal & \multirow[t]{-2}{*}{\raggedright\arraybackslash Gamma}\\
\bottomrule
\end{tabular*}
\begin{tablenotes}
\item [1] The HalfCauchy prior distribution for the scale parameter creates an effective model for large and small data sets.
\end{tablenotes}
\end{threeparttable}
\end{center}

Tables \ref{tab: RecomendPriorsDHR} and \ref{tab: RecomendPriorsIHR} summarize our recommended prior distributions, ranked by AWRE. While no prior combination is universally optimal, the Half-Cauchy distribution emerges as a robust choice for the scale parameter. Compared to classical methods, the MCMC approach consistently outperforms across diverse lifetime datasets, regardless of structure. It also mitigates overdispersion issues commonly encountered in classical techniques, leading to more accurate survival data analysis. Classical methods with small sample sizes are unreliable but improve as data size increases.

\begin{center}
\begin{threeparttable}[!h]
\caption{\label{tab: RecomendPriorsIHR} A list of recommended combinations of prior distributions for small and large Weibull-distributed data sets with increasing hazard rate properties. The priors are ranked based on the AWRE.}
\fontsize{8}{10}\selectfont
\begin{tabular*}{\textwidth}{@{\extracolsep\fill}llll}
\toprule
\multicolumn{2}{c}{n = (15, 25)} & \multicolumn{2}{c}{n = (55, 100)} \\
\cmidrule(l{3pt}r{3pt}){1-2} \cmidrule(l{3pt}r{3pt}){3-4}
Shape, $\beta$ & Scale, $\alpha$ & Shape, $\beta$ & Scale, $\alpha$\\
\midrule
Gamma &  & Exponential & \\

HalfNormal &  & LogNormal & \\

LogNormal & \multirow[t]{-3}{*}{\raggedright\arraybackslash \footnotemark[1] HalfCauchy} & Gamma & \\

 &  & HalfNormal & \multirow[t]{-4}{*}{\raggedright\arraybackslash \footnotemark[1] HalfCauchy}\\
\bottomrule
\end{tabular*}
\begin{tablenotes}
\item [1] The HalfCauchy prior distribution for the scale parameter creates an effective model for large and small data sets.
\end{tablenotes}
\end{threeparttable}
\end{center}

The Gamma, LogNormal, and HalfNormal distributions exhibit desirable properties for capturing latent information and producing accurate estimates for shape parameters in increasing and decreasing hazard rate models. The Gamma distribution is exceptionally versatile, while the LogNormal distribution effectively represents varying hazard rate patterns.

Our research has led to developing an adaptive MCMC-based model selection process that utilizes weighted relative efficiency to identify optimal default models. Recognizing that model parameters belong to positive sampling spaces, we compared this recommended MCMC model with classical approaches across various sample sizes and hazard rate structures. Our key finding is that MCMC models consistently achieve absolute efficiency, with sampling variance contributing the most to total asymptotic variance. This efficiency is critical for accurately fitting Weibull distribution models to datasets with varying hazard rate properties.

The practical advantages of MCMC models over classical methods underscore their relevance in statistical modeling, offering a reliable framework for survival data analysis. Our findings highlight the importance of adaptive prior selection in achieving robust and scalable Bayesian inference for lifetime data, making MCMC a powerful survival analysis tool for researchers and practitioners.

\section*{Declarations}

\subsection*{Competing interests}

The authors declare no competing interests in this article.

%

\begin{appendices}

\section{Algorithm \ref{alg: ALg1}}\label{Alg-1}

The proposed adaptive MCMC approach, detailed in Algorithm \ref{alg: ALg1}, iteratively updates the prior distribution to minimize the Kullback-Leibler (KL) divergence.

This methodology is beneficial when the initial prior is a poor fit for the data or when sampling from the posterior distribution is computationally challenging. By efficiently exploring the posterior space, the \textit{No-U-Turn} Sampler facilitates optimal prior selection, reducing divergence.

Mathematically, at each iteration, the adaptive MCMC updates the prior distribution to minimize KL divergence:

\[
\pi_{t+1}(\theta) = \arg\min_{\pi(\theta)} D_{\text{KL}}(P_{\text{true}} || P_{\text{posterior}}),
\]
where \( P_{\text{true}} \) is the true failure time distribution and \( P_{\text{posterior}} \) is the posterior predictive distribution given by:
\[
P_{\text{posterior}}(x | X_1, X_2, \dots, X_n) = \int f(x | \theta) \pi(\theta | X_1, X_2, \dots, X_n) d\theta.
\]

Since KL divergence is non-negative and decreases with each iteration of the adaptive MCMC procedure, the process converges to a minimum divergence value. At convergence, the prior distribution is optimized to approximate the true failure time density closely.

	\begin{algorithm}[H]\label{alg: ALg1}
		\caption{Adaptive MCMC with Prior Updating}
		\begin{algorithmic}[1]
			\State \textbf{Input:} Observed failure times \( X_1, X_2, \dots, X_n \), initial prior \( \pi_0(\theta) \)
			\State \textbf{Output:} Updated prior distribution \( \pi(\theta) \), posterior samples \( \theta_1, \theta_2, \dots, \theta_m \)
			\State
			\State \textbf{Step 1: Initial Prior Selection}
			\State \quad Start with an initial prior distribution \( \pi_0(\theta) \) over the model parameters \( \theta \) (e.g., the shape and scale parameters of the Weibull distribution for failure times).
			\State
			\State \textbf{Step 2: Sampling from the Posterior}
			\State \quad Using MCMC (such as NUTS), sample from the posterior distribution \( \pi(\theta | X_1, X_2, \dots, X_n) \).
			\State \quad Obtain \( \theta_1, \theta_2, \dots, \theta_m \), the posterior samples.
			\State
			\State \textbf{Step 3: Update Priors}
			\State \quad Update the prior distribution \( \pi(\theta) \) based on the posterior samples to reduce the KL divergence between the posterior predictive distribution and the true failure time density.
			\State
			\State \textbf{Repeat Steps 2 and 3 as necessary to refine the prior and posterior distributions.}
		\end{algorithmic}
	\end{algorithm}

\section{Proof Theorem \ref{theo2}}\label{Ap-theo2}


For a Weibull distribution with parameters \(\beta\) (shape) and \(\alpha\) (scale), the log-likelihood function is given by:
\[
\ell(\theta; \{X_i\}_{i=1}^n) = \sum_{i=1}^n \log f(X_i \mid \beta, \alpha) = n \log \beta - n \beta \log \alpha + (\beta - 1) \sum_{i=1}^n \log X_i - \sum_{i=1}^n \left( \frac{X_i}{\alpha} \right)^\beta.
\]

 The Fisher Information matrix is a \(2 \times 2\) matrix given by:
\[
I_n(\theta) = 
\begin{pmatrix}
	\mathbb{E}\left[-\frac{\partial^2 \ell(\theta)}{\partial \alpha^2}\right]  & \mathbb{E}\left[-\frac{\partial^2 \ell(\theta)}{\partial \beta \partial \alpha}\right] \\
	\mathbb{E}\left[-\frac{\partial^2 \ell(\theta)}{\partial \alpha \partial \beta}\right] & \mathbb{E}\left[-\frac{\partial^2 \ell(\theta)}{\partial \beta^2}\right]
\end{pmatrix}.
\]
The elements of the Fisher Information matrix can be computed as:
\[
I_n(\beta, \alpha) = 
\begin{pmatrix}
	\frac{n \beta^2}{\alpha^2}  & -\left[1 + \lambda_1 \right]\frac{n}{\alpha} \\
	 -\left[1 + \lambda_1 \right]\frac{n}{\alpha}  & \quad \left[1 + \lambda_2 + 2\lambda_1\right] \frac{n}{\beta^2}
\end{pmatrix},
\]
where \(\lambda_1 = -0.5772 \text{ (Euler's constant), } \lambda_2 = \lambda_1^2 + \frac{\pi^2}{6} = 1.9781\).
By the Central Limit Theorem for MLEs, under regularity conditions, the distribution of the MLE \(\hat{\theta}_n = (\hat{\beta}_n, \hat{\alpha}_n)\) converges in distribution to a multivariate normal distribution:
\[
\sqrt{n} (\hat{\theta}_n - \theta_0) \xrightarrow{d} N(0, I_n(\theta_0)^{-1}),
\]
where \(I_n(\theta_0)\) is the Fisher Information matrix evaluated at the true parameter values \(\theta_0 = (\beta_0, \alpha_0)\).

\section{Bayesian Approach}\label{BayesianMethod}

Let $\textbf{t} = \left\{t_1, \dots, t_n \right\}~\text{be a random sample with a sampling distribution }~ p\left(\bf{t}~|~\beta, \alpha, \eta,\lambda\right)$. The likelihood function is given by Equation \ref{eq: GeneralLike}, and the posterior distribution is expressed in Equation \ref{eq: BayesianGeneralModel}.

\begin{align}
		L\left( \beta, \alpha | ~\textbf{t} \right)  = \prod_{i=1}^{n}p\left(t_i~|~ \beta, \alpha, \eta, \lambda\right) \label{eq: GeneralLike}
\end{align}

\begin{align}
	p\left( \beta, \alpha, \eta, \lambda | ~\textbf{t} \right) & \propto \prod_{i=1}^{n}p\left(t_i~|~ \beta, \alpha, \eta, \lambda\right)p\left(\beta~|~\eta \right) p\left(\alpha~|~\lambda\right) p\left(\eta\right) p\left(\lambda\right) \label{eq: BayesianGeneralModel}
\end{align}
Bayesian models are sensitive to the choice of prior distributions in their conditional form $p(.|.)$ and marginal form $p(.)$. Ensuring a suitable selection of priors is crucial for stable and interpretable posterior estimates.

Equation \ref{eq: LimitNormalDist} represents the limiting form of a standard normal distribution as $\sigma \rightarrow \infty$, demonstrating its relationship to the LogNormal density in Equation \ref{eq: LimitsLogNormalPrior}.

\begin{align}
	\lim_{\sigma \to \infty}\phi_{norm} \left( \frac{\log t - \mu}{\sigma}\right) &	= \lim_{\sigma \to \infty} \frac{1}{\sqrt{2 \pi}}\exp\left\{ - \frac{1}{2}\left( \frac{\log t - \mu}{\sigma}\right)^2\right\} \nonumber\\
																								&	= \frac{1}{\sqrt{2\pi} } \in \left(0, \infty \right) \label{eq: LimitNormalDist}
\end{align}

\begin{eqnarray}
	\frac{1}{\sigma t}\phi_{\mathit{norm}}\left(\frac{\log{t} - \mu}{\sigma} \right) \approx \frac{1}{\sigma t\sqrt{2 \pi}} \label{eq: LimitsLogNormalPrior}
\end{eqnarray}

We adopt a LogNormal hyperprior distribution that is weakly informative, allowing for a broad range of potential parameter values. Equation \ref{eq: LimitsLogNormalPrior} shows that for a LogNormal prior $\pi(t)$, its density is proportional to $1/t$ as the log standard deviation increases. This property reduces weight on extreme values while maintaining flexibility in the prior, making it well-suited for Bayesian modeling.

\subsection{Noninformative and reference priors}

The selection of priors has been widely studied in Bayesian analysis \cite{gelman1995bayesian, robert1999monte, brooks2011handbook, berger2013statistical}. Early studies \cite{jaynes1968prior, bernardo1979reference, berger1992development, kass1996selection} advocated for noninformative priors, such as Jeffreys' priors, to facilitate objective Bayesian inference. However, for multi-parameter models, noninformative priors can result in improper distributions, leading to extreme posterior values. Studies \cite{sun1997note, sun1998reference} demonstrated that reference priors often provide better frequentist coverage properties than Jeffreys' priors for small sample cases.

\subsection{Weakly informative reference priors}

Recent research \cite{gelman2008weakly, gelman2017prior, lemoine2019moving, depaoli2020importance, tian2023specifying} has shifted towards weakly informative priors, which balance flexibility with regularization. Weakly informative priors place reasonable density on plausible parameter values while limiting extreme estimates. \cite{gelman1995bayesian} argues that weakly informative priors incorporate minimal real-world information to improve interpretability while avoiding the pitfalls of noninformative priors.


\begin{definition}\label{DefIncompleteGamma} Incomplete Gamma Functions
	
	In accordance with the notation presented in section 8.2 of \cite{olver2010nist}, the incomplete Gamma functions, denoted as $\Gamma{\left(a,z\right)}$ and $\Gamma{\left(a,z\right)}$, are defined as follows:
	\begin{align}
		\label{IncompleteGammaFunc}
		\Gamma{\left(a,z\right)} &= \int_{0}^{z}t^{a-1}{\exp}\{-t\}dt, \quad \text{and} \quad \Gamma{\left(a,z\right)} = \int_{z}^{\infty}t^{a-1}{\exp}\{-t\}dt, \quad a \in \mathfrak{R}, ~ z \ge 0
	\end{align}
	It is noteworthy that the relationship $\Gamma{\left(a,z\right)} + \Gamma{\left(a,z\right)} = \Gamma{\left(a\right)}$ holds for $a \ne 0, -1, -2, \dots$. This relationship can be alternatively expressed as:
	\begin{align}
		\label{NormalizedGammFunc}
		\Gamma{\left(a,z\right)} = \Gamma{\left(a\right)}\left(1 - p\left(a, x\right) \right)
	\end{align}
	where $p\left(a, x\right) = \nicefrac{\Gamma{\left(a,z\right)}}{\Gamma{\left(a\right)}}$, representing the cumulative Gamma distribution with a unit scale. The ordinary Gamma function, denoted as $\Gamma{\left(a\right)}$, is defined as $\Gamma{\left(a\right)} = \int_{0}^{\infty}t^{a-1}{\exp}\{-t\}dt$.
\end{definition}

This definition encapsulates the integral representations and relationships characterizing the incomplete Gamma functions, offering a foundational understanding for subsequent discussions in the context of the provided mathematical expressions.

\subsection{Quantities of Interest}
In survival analysis, mean residual lifetime (MRL) and mortality risk (MR) are essential for evaluating treatment effectiveness.
Let $F\left( x\right)$ be the $\mathit{CDF}$ of a continuous random variable $\mathbf{X}$. Assuming that the survival function $\overline{F}\left( x\right) = 1 - F\left( x\right)$ is well defined, then MR, hazard rate
$h\left(x\right) = \nicefrac{dF\left(x\right)}{F\left(x\right)}$.
Let the intervention time be $t=0$, and the system still functions at time $x$. Suppose the system fails at a random time $T$. Then, the probability that the system survives an additional time $t$ is given by $\nicefrac{\overline{F}\left(x + t\right)}{\overline{F}\left( x\right)}, ~ 0< x< t$.
Mean residual lifetime can then be defined as
\begin{eqnarray}\label{MeanResidLife}
MRL\left(x\right) = \nicefrac{\int_{0}^{\infty} \overline{F}\left(t + x\right) dt}{\overline{F}\left(x\right)}
\end{eqnarray}

Let $F\left(x\right)$ be a cumulative distribution function of the Weibull distribution with parameters shape $(\beta)$ and scale $(\alpha)$. Then, we can show that
\begin{align}
\label{eq: MeanResidualLife}
MRL\left( x\right) = & \frac{\int_{0}^{\infty} \exp\left\{-\left(\nicefrac{x + t}{\alpha}\right)^\beta \right\}dt}{\exp\left\{-\left(\nicefrac{x}{\alpha}\right)^\beta\right \}} \nonumber\\
= & \frac{\alpha}{\beta}\exp\left\{ \left(\nicefrac{x}{\alpha}\right)^\beta\right\}\Gamma \left( \nicefrac{1}{\beta}, \left(\nicefrac{x}{\alpha} \right)^\beta \right) \nonumber \\
\textit{by Definition \ref{DefIncompleteGamma}, Equation \ref{NormalizedGammFunc},} \nonumber \\
= & \frac{\alpha}{\beta}\exp\left\{ \left(\nicefrac{x}{\alpha}\right)^\beta\right\}\Gamma\left(\nicefrac{1}{\beta}\right)\left\{ 1 - \tilde{F}\left( \left(\nicefrac{x}{\alpha} \right)^\beta; \nicefrac{1}{\beta}, 1 \right)\right\}
\end{align}
where,
\begin{eqnarray}\label{CumGammFunc}
\tilde{F}\left( \left(\nicefrac{x}{\alpha} \right)^\beta; \nicefrac{1}{\beta}, 1 \right) = \int_{0}^{\left(\nicefrac{x}{\alpha} \right)^\beta}u^{\frac{1}{\beta}-1}\exp\{ -u\}du
\end{eqnarray}

For given values of $\beta$ and $\alpha$, Equation \ref{CumGammFunc} represents the cumulative probability function of a Gamma distribution, allowing for analytical solutions of Equation \ref{eq: MeanResidualLife}.


\section{MCMC Method} \label{MCMCMethods}

MCMC methods are fundamental to Bayesian inference, particularly for sampling from complex posterior distributions. In this section, we explore non-conjugate priors for the Weibull distribution parameters (Table \ref{tab: DefinitionOfPriors}) to determine the optimal parameter values, $\theta_{\text{opt}}$, for the sampling distribution. These values are chosen from the set of all possible values, denoted as $\Omega_{\theta}$. By leveraging MCMC algorithms \cite{gelman1995bayesian, robert1999monte, brooks2011handbook}, we can approximate probability distributions even when their densities are too complex for explicit integration.

Let $\theta = \left(\beta, \alpha\right)$ be parameters of the Weibull distribution. Let Equation (\ref{eq: BayesWeibLikelihood}) be the likelihood function of $\theta$ given the observed lifetime $\mathbf{t} = \{t_1, t_2, \dots, t_n\}$. 

\begin{eqnarray}
\left( ~ t_1, t_2, \dots, t_n ~| ~ \theta ~ \right) = \prod_{i=1}^{n} \left( ~ t_i ~ | ~ \theta_i ~ \right) \label{eq: BayesWeibLikelihood}
\end{eqnarray}

Assuming $\gamma$ is the hyperparameter of the prior distribution, $\left( \theta \right)$. The joint posterior distribution of the parameters is given by:  $\left( ~\theta, \gamma ~| ~\mathbf{t} ~ \right) \propto \left(~ \mathbf{t} ~|~ \theta ~\right) \left(~ \theta ~| ~\gamma ~\right) \left(~ \gamma~\right)$. 
To make Bayesian inferences about $\theta$, we use its marginal posterior distribution:

\begin{eqnarray}
\left(~ \theta~ | ~ t_1, t_2, \dots, t_n ~\right) = \int \left(~ \theta, \gamma ~ |~ t_1, t_2, \dots, t_n~ \right) d \gamma \label{eq: marginalposterior}
\end{eqnarray}

Since obtaining a closed-form solution for the posterior mean, $\mathit{E}[ \theta | \mathbf{t} ]$, is often infeasible, MCMC algorithms \cite{hoffman2014no, brooks2011handbook, robert1999monte, carpenter2017stan} provide an effective alternative by enabling direct sampling from the joint posterior distribution $\left( \theta, \gamma~ |~ \mathbf{t} ~\right)$ and thus from the marginal posterior distributions $\left( \theta~ | ~ \mathbf{t}\right)$ and $\left(~\gamma~ | ~ \mathbf{t}~\right)$ as those of Weibull model.. This allows us to estimate summary statistics using Monte Carlo approximations, leveraging Markov chain properties under ergodicity and independence conditions.

\subsubsection{MCMC Sampling Statements}

We specify prior distributions using the moment-matching method, ensuring flexibility by introducing hyperpriors. Weakly informative priors are set to allow sufficient prior diffusion while maintaining computational stability. For instance, LogNormal hyperpriors incorporate broad shape parameters, while Normal hyperpriors include more extensive variance terms (Tables \ref{tab: ShapeScalePriorSpecs} and \ref{tab: ScaleOnlyPriorSpecs}).

We adopt a complete probabilistic approach to parameter estimation rather than traditional point estimates. This approach quantifies uncertainty across the entire parameter space, yielding more robust inference.


\begin{center}
\begin{threeparttable}
\caption{\label{tab: ShapeScalePriorSpecs}Specifications of prior distributions for the Weibull model parameters. LogNormal hyper-priors provide a convenient structure for hierarchical modeling.}
\centering
\fontsize{8}{10}\selectfont
\begin{tabular}{@{}llll@{}}
\toprule
Exponential & Gamma & HalfNormal & LogNormal\\
\midrule
$\footnotemark[1] \lambda_{\beta} \sim \mathit{LogNormal}\left(0, 25\right)$       &
$\footnotemark[1] \mathit{A}_{\beta} \sim \mathit{LogNormal}\left(0,25\right)$    &
$\footnotemark[1]\mu_{\beta} \sim \mathit{Normal}\left(0,100\right)$           &
$\footnotemark[1]\mu_{\beta} \sim \mathit{Normal}\left(0,100\right)$\\
$\footnotemark[1] \lambda_{\alpha} \sim \mathit{LogNormal}\left(0,25\right)$      &
$\footnotemark[1] \mathit{B}_{\beta} \sim \mathit{LogNormal}\left(0,25\right)$    &
$\footnotemark[1]\sigma_{\beta} \sim \mathit{LogNormal}\left(0,25\right)$        &
$\footnotemark[1]\sigma_{\beta} \sim \mathit{LogNormal}\left(0,25\right)$\\
$\quad \beta \sim \mathit{Exponential}\left(\footnotemark[1] \lambda_{\beta}\right)$     &
$\footnotemark[1] \mathit{A}_{\alpha} \sim \mathit{LogNormal}\left(0,25\right)$           &
$\footnotemark[1]\mu_{\alpha} \sim \mathit{Normal}\left(0,100\right)$                  &
$\footnotemark[1]\mu_{\alpha} \sim \mathit{Normal}\left(0,100\right)$\\
$\quad \alpha \sim \mathit{Exponential}\left(\footnotemark[1] \lambda_{\alpha}\right)$   &
$\footnotemark[1] \mathit{B}_{\alpha} \sim \mathit{LogNormal}\left(0,25\right)$           &
$\footnotemark[1]\sigma_{\alpha} \sim \mathit{LogNormal}\left(0,25\right)$               &
$\footnotemark[1]\sigma_{\alpha} \sim \mathit{LogNormal}\left(0,25\right)$\\
&
$\quad \beta \sim \mathit{Gamma}\left(\footnotemark[1]\mathit{A}_{\beta}, \footnotemark[1]\mathit{B}_{\beta}\right)$ &
$\quad \beta \sim \mathit{HalfNormal}\left(\footnotemark[1]\mu_{\beta}, \footnotemark[1]\sigma_{\beta}\right)$       &
$\quad \beta \sim \mathit{LogNormal}\left(\footnotemark[1] \mu_{\beta}, \footnotemark[1] \sigma_{\beta}\right)$\\
&
$\quad\alpha \sim \mathit{Gamma}\left(\footnotemark[1]\mathit{A}_{\alpha}, \footnotemark[1]\mathit{B}_{\alpha}\right)$
&
$\quad\alpha \sim \mathit{HalfNormal}\left(\footnotemark[1]\mu_{\alpha}, \footnotemark[1]\sigma_{\alpha}\right)$
&
$\quad \alpha \sim \mathit{LogNormal}\left(\footnotemark[1] \mu_{\alpha}, \footnotemark[1] \sigma_{\alpha}\right)$\\
\bottomrule
\end{tabular} 
\begin{tablenotes}
\item [1] Weakly informative hyperparameter priors are initialized based on typical values within physically possible units of measurement, but on a wide range of values. By setting large values for the shape parameter, log-standard deviation of the LogNormal distribution, and significant variance for the Normal distribution, the hierarchical structure allows minimal information through the hyperparameters for posterior regularization.
\end{tablenotes}
\end{threeparttable}
\end{center}

\bigskip

\begin{center}
\begin{threeparttable}[h!]
\caption{Specifications of prior distributions used only for the scale parameter of the Weibull life model}
\label{tab: ScaleOnlyPriorSpecs}
\fontsize{8}{10}\selectfont
\begin{tabular}[t]{ll}
\toprule
HalfCauchy & InverseGamma\\
\midrule
$\textsuperscript{1}\mu_{\alpha}  \sim \mathit{Normal}\left(0,100\right)$     &
$\textsuperscript{1}\mathit{A}_{\alpha} \sim  \mathit{LogNormal}\left(0,25\right)$\\
$\textsuperscript{1} \sigma_{\alpha} \sim \mathit{LogNormal}\left(0,25\right)$  &
$\textsuperscript{1} \mathit{B}_{\alpha} \sim \mathit{LogNormal}\left(0,25\right)$\\
$\quad \alpha \sim \mathit{HalfCauchy}\left( \textsuperscript{1}\mu_{\alpha}, \textsuperscript{1} \sigma_{\alpha}\right)$   &
$\quad \alpha \sim \mathit{InverseGamma}\left(\textsuperscript{1}\mathit{A}_{\alpha}, \textsuperscript{1}\mathit{B}_{\alpha}\right)$\\
\bottomrule
\end{tabular}
\begin{tablenotes}
\item [1] Weakly informative hyperparameter priors are initialized based on typical values within physically possible units of measurement, but on a wide range of values. By setting large values for the shape parameter, log-standard deviation of the LogNormal distribution, and significant variance for the Normal distribution, the hierarchical structure allows minimal information through the hyperparameters for posterior regularization.
\end{tablenotes}
\end{threeparttable}
\end{center}

\begin{table}[h!]
	\caption{\label{tab: DefinitionOfPriors} 
		The proposed initial distribution is either assigned to $\beta$ (shape), $\alpha$ (scale), or independently for $\theta = (\beta, \alpha)$ both shape and scale.}
	\centering
	\begin{tabular}{@{}lll@{}}
		\toprule
		\shortstack{Priors\\ Distributions} & p.d.f of Priors & \shortstack{Initial Values of\\ Priors Parameters}\\
		\midrule
		$\mathit{Gamma}$        & $\pi_{\theta} = \left(\nicefrac{b^a}{\Gamma a} \right) \theta ^{a- 1} \exp -(b \theta)$   & $a = \nicefrac{E^2\left[ \theta\right]}{\sigma^2_{\theta}}$ \\
		&&$b = \nicefrac{E\left[ \theta \right]}{\sigma^2_{\theta}}$  \\ \\
		$\mathit{Exponential}$  & $\pi_{\theta} = b ~ \exp (-b \theta)$                                 & $b = \nicefrac{1}{E\left[\theta \right]}$\\ \\
		$\mathit{LogNormal}$    & $C = \nicefrac{1}{\sigma \sqrt{2 \pi}}$                                          &\\
		& $\pi_{\theta} = C \exp ( \nicefrac{ -(\ln \theta - \mu)^2}{ 2 \sigma^2} )$      & $\mu = \ln\left( \nicefrac{E\left[\theta \right]}{ \sqrt{\nicefrac{\sigma^2_{\theta}}{E^2\left[ \theta\right]} + 1 } }\right)$\\ \\
		&&$\sigma = \sqrt{\ln\left(\nicefrac{\sigma^2_{\theta}}{E^2\left[ \theta\right]} + 1\right)}$\\ \\
		$\mathit{HalfNormal}$     &$\pi_{\theta} = \nicefrac{1}{\sigma \sqrt{2 \pi}} ~ \exp \left(\nicefrac{-(\theta - \mu)^2}{2 \sigma ^2} \right)$    & $\theta > 0, ~ \mu = E\left[\theta \right], ~ \sigma = \sigma_{\theta}$ \\ \\
		$\mathit{HalfCauchy}$   &$C = \nicefrac{1}{\pi \sigma}$                                         &\\
		&$\pi_{\alpha} = C~ \left({1 + \nicefrac{\left( \alpha - \mu\right)^2}{\sigma^2}}\right)^{-1}$    &$\alpha > 0, ~ \mu = E\left[\alpha \right],~ \sigma = \sigma_{\alpha}$\\ \\
		$\mathit{InverseGamma}$ &$\pi_{\alpha} = \left(\nicefrac{b^a}{\Gamma a} \right) \alpha ^ {- a- 1} \exp \left( \nicefrac{-b}{\alpha} \right)$& $a = \nicefrac{E^2\left[ \alpha\right]}{\sigma^2_{\alpha}} + 2$\\
		&&$b = E\left[\alpha \right] + \nicefrac{E^3\left[\alpha \right]}{\sigma^2_{\alpha}}$\\
		\midrule
	\end{tabular}	
\end{table}

\end{appendices}

\newpage
\clearpage



\bibliography{bibliography}

\end{document}